\definecolor{NYUlight}{HTML}{8900e1} 	
\newtheorem{thm}{Theorem}
\newtheorem{lem}{Lemma}
\theoremstyle{definition}
\newtheorem{defn}{Definition}
\newtheorem{rem}{Remark}
\def\BibTeX{{\rm B\kern-.05em{\sc i\kern-.025em b}\kern-.08em
    T\kern-.1667em\lower.7ex\hbox{E}\kern-.125emX}}
\newcommand{\blue}[1]{{\color{blue}#1}}
\newcommand{\indep}{\raisebox{0.05em}{\rotatebox[origin=c]{90}{$\models$}}}
\begin{document}

\title{Distribution-Agnostic Database De-Anonymization Under Synchronization Errors\\
\thanks{This work is supported in part by National Science Foundation grants 2148293, 2003182, and 1815821, and NYU WIRELESS Industrial Affiliates.}}

\author{Serhat Bakirtas, Elza Erkip\\
 NYU Tandon School of Engineering\\
Emails: \{serhat.bakirtas, elza\}@nyu.edu }

\maketitle

\begin{abstract}
    There has recently been an increased scientific interest in the de-anonymization of users in anonymized databases containing user-level microdata via multifarious matching strategies utilizing publicly available correlated data. Existing literature has either emphasized practical aspects where underlying data distribution is not required, with limited or no theoretical guarantees, or theoretical aspects with the assumption of complete availability of underlying distributions. In this work, we take a step towards reconciling these two lines of work by providing theoretical guarantees for the de-anonymization of random correlated databases without prior knowledge of data distribution. Motivated by time-indexed microdata, we consider database de-anonymization under both synchronization errors (column repetitions) and obfuscation (noise). By modifying the previously used replica detection algorithm to accommodate for the unknown underlying distribution, proposing a new seeded deletion detection algorithm, and employing statistical and information-theoretic tools, we derive sufficient conditions on the database growth rate for successful matching. Our findings demonstrate that a double-logarithmic seed size relative to row size ensures successful deletion detection. More importantly, we show that the derived sufficient conditions are the same as in the distribution-aware setting, negating any asymptotic loss of performance due to unknown underlying distributions.
\end{abstract}

\begin{IEEEkeywords}
dataset, database, matching, de-anonymization, alignment, distribution-agnostic, privacy, synchronization, obfuscation
\end{IEEEkeywords}

\section{Introduction}
\label{sec:introduction}
With the accelerating growth of smart devices and applications, there has been a considerable collection of user-level microdata in private companies' and public institutions' possession which is often shared and/or sold. Although this data transfer is performed after removing the explicit user identifiers, a.k.a. \emph{anonymization}, and coarsening of the data through noise, a.k.a. \emph{obfuscation}, there is a growing concern from the scientific community about the privacy implications~\cite{ohm2009broken}. These concerns were further validated by the success of a series of practical attacks on real data by researchers~\cite{naini2015you,datta2012provable,narayanan2008robust,sweeney1997weaving,takbiri2018matching}. In the light of these successful attacks, recently there has been an increasing effort on the information-theoretic and statistical foundations of \emph{database de-anonymization}, a.k.a. \emph{database alignment/matching/recovery}~\cite{cullina,shirani8849392,dai2019database,kunisky2022strong,tamir2022joint,bakirtas2021database,bakirtas2022matching,bakirtas2022seeded,bakirtas2023database,bakirtas2022database}.

In our recent work we have focused on the database de-anonymization problem under synchronization errors. In~\cite{bakirtas2022matching}, we investigated the matching of Markov databases under synchronization errors only, with no subsequent obfuscation/noise. We showed that the synchronization errors could be detected through a histogram-based detection. Furthermore, we found the noiseless matching capacity to be equal to the erasure bound where locations of deletions and replications are known a priori. More relevantly, in~\cite{bakirtas2022seeded}, we considered the de-anonymization of databases under noisy synchronization errors. We proposed a noisy replica detection algorithm and a seeded deletion detection algorithm to recover synchronization errors. We proposed a joint-typicality-based matching algorithm and derived achievability results, which we subsequently showed to be tight, given a seed size logarithmic with the row size of the database. Then in~\cite{bakirtas2023database}, we improved this sufficient seed size to one double logarithmic with the row size. Albeit successful in deriving detecting and matching results, in these works, the availability of information on the underlying distributions was assumed and the proposed algorithms were tailored for these known distributions. 

Motivated by most practical settings where the underlying distributions are not readily available, but only could be estimated from the available data, in this paper, we investigate the de-anonymization problem without any prior knowledge of the underlying distributions. We focus on a noisy random column repetition model borrowed from~\cite{bakirtas2022seeded}, as illustrated in Figure~\ref{fig:intro}. We modify the noisy replica detection algorithm proposed in~\cite{bakirtas2022seeded} so that it still works in the distribution-agnostic setting. Then we propose a novel outlier-detection-based deletion detection algorithm and show that when seeds, whose size grows double logarithmic with the number of users (rows), are available, the underlying deletion pattern could be inferred. Finally, through a typicality-based de-anonymization algorithm that relies on the estimated distributions, we show that database de-anonymization could be performed with no asymptotic loss of performance compared to when all the information on the distributions is available a priori. 

\begin{figure}[t]
\centerline{\includegraphics[width=0.4\textwidth,trim={0cm 9cm 17cm 0},clip]{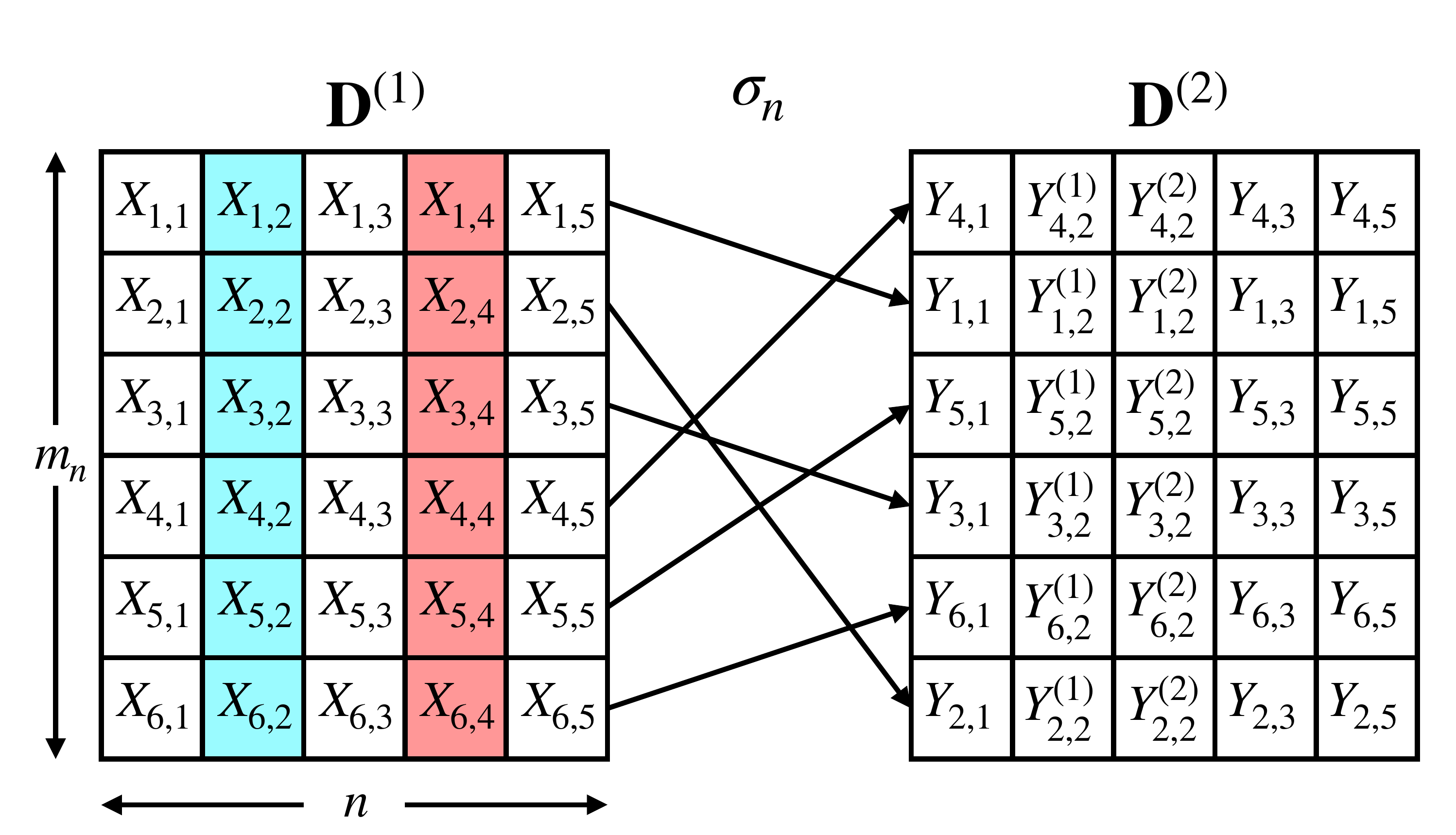}}
\caption{An illustrative example of database matching under column repetitions. The column coloured in red is repeated zero times, \emph{i.e.,} deleted, whereas the column coloured in blue is repeated twice, \emph{i.e.,} replicated. $Y_{i,2}^{(1)}$ and $Y_{i,2}^{(2)}$ denote noisy copies/replicas of $X_{i,2}$. Our goal is to estimate the correct row permutation ${\sigma_n=\left(\begin{smallmatrix} 
1 & 2 & 3 & 4 & 5 & 6\\
2 & 6 & 4 & 1 & 3 & 5
\end{smallmatrix}\right)}$, by matching the rows of $\mathbf{D}^{(1)}$ and $\mathbf{D}^{(2)}$ without any prior information on the underlying database ($p_{X}$), obfuscation ($p_{Y|X}$), and repetition ($p_S$) distributions.}
\label{fig:intro}
\end{figure}
 
The structure of the rest of this paper is as follows: Section~\ref{sec:problemformulation} introduces the formal statement of the problem. Section~\ref{sec:mainresult} contains our proposed algorithms, states our main result, and contains its proof. Finally, Section~\ref{sec:conclusion} consists of the concluding remarks.

\noindent{\em Notation:} We denote a matrix $\mathbf{D}$ with bold capital letters, and its $(i,j)$\textsuperscript{th} element with $D_{i,j}$. A set is denoted by a calligraphic letter, \emph{e.g.}, $\mathfrak{X}$. $[n]$ denotes the set of integers $\{1,\dots,n\}$. Asymptotic order relations are used as defined in~\cite[Chapter 3]{cormen2022introduction}. All logarithms are base 2. $H(.)$ and $I(.;.)$ denote the Shannon entropy and the mutual information~\cite[Chapter 2]{cover2006elements}, respectively.

\section{Problem Formulation}
\label{sec:problemformulation}

We use the following definitions, most of which are borrowed from~\cite{bakirtas2022seeded} to formalize our problem.

\begin{defn}{\textbf{(Anonymized Database)}}\label{defn:unlabeleddb}
An ${(m_n,n,p_{X})}$ \emph{anonymized database} ${\mathbf{D}=\{X_{i,j}\in\mathfrak{X}\}, (i,j)\in[m_n]\times [n]\}}$ is a randomly generated ${m_n\times n}$ matrix with $\smash{X_{i,j}\overset{\text{i.i.d.}}{\sim} p_X}$, where $p_X$ has a finite discrete support $\mathfrak{X}=\{1,\dots,|\mathfrak{X}|\}$.
\end{defn}

\begin{defn}{\textbf{(Column Repetition Pattern)}}\label{defn:repetitionpattern}
The \emph{column repetition pattern} $S^n=\{S_1,S_2,...,S_n\}$ is a random vector with $\smash{S_i\overset{\text{i.i.d.}}{\sim} p_S}$, where $p_S$ has a finite integer support ${\{0,\dots,s_{\max}\}}$.
\end{defn}

\begin{defn}{\textbf{(Anonymization Function)}}
    The \emph{anonymization function} $\sigma_n$ is a uniformly-drawn permutation of $[m_n]$.
\end{defn}

\begin{defn}{\textbf{(Labeled Correlated Database)}}\label{defn:labeleddb}
Let $\mathbf{D}^{(1)}$, $S^n$ and $\sigma_n$ be a mutually-independent ${(m_n,n,p_{X})}$ anonymized database, repetition pattern and anonymization function triplet. Let $p_{Y|X}$ be a conditional probability distribution with both $X$ and $Y$ taking values from $\mathfrak{X}$. Given $\smash{\mathbf{D}^{(1)}}$, ${S}^n$, $\sigma_n$ and $p_{Y|X}$, $\smash{\mathbf{D}^{(2)}}$ is called the \emph{labeled correlated database} if the respective $(i,j)$\textsuperscript{th} entries $X_{i,j}$ and $Y_{i,j}$ of $\mathbf{D}^{(1)}$ and $\mathbf{D}^{(2)}$ have the following relation:
\begin{align}
    Y_{\sigma_n(i),j}&=
    \begin{cases}
      E , &  \text{if } S_{j}=0\\
      Z^{S_j} & \text{if } S_{j}\ge 1
    \end{cases} \quad \forall i\in[m_n],\:\forall j\in[n]
\end{align}
where $Z^{S_j}$ is a row vector consisting of $S_j$ noisy replicas of $X_{i,j}$ with the following conditional probability distribution
\vspace{-1em}
\begin{align}
    \Pr\left(Z^{S_j}=z^{S_j}|X_{i,j}=x\right)
    &=\prod\limits_{l=1}^{S_j} p_{Y|X}\left(z_l |x\right)\label{eq:noiseiid}
\end{align}
where $z^{S_j}=z_1,\dots,z_{S_j}$ and ${ Y_{\sigma_n(i),j}=E}$ corresponds to $ Y_{\sigma_n(i),j}$ being the empty string.

Note that $S_j$ indicates the times the $j$\textsuperscript{th} column of $\mathbf{D}^{(1)}$ is repeated. When $S_j=0$, the $j$\textsuperscript{th} column of $\mathbf{D}^{(1)}$ is said to be \emph{deleted} and when $S_j>1$, the $j$\textsuperscript{th} column of $\mathbf{D}^{(1)}$ is said to be \emph{replicated}.

The $i$\textsuperscript{th} row $X_i$ of $\mathbf{D}^{(1)}$ and the $\sigma_n(i)$\textsuperscript{th} row $Y_{\sigma_n(i)}$ of $\mathbf{D}^{(2)}$ are called \emph{matching rows}.
\end{defn}

\begin{figure}[t]
\centerline{\includegraphics[width=0.5\textwidth,trim={0cm 25cm 0cm 0cm},clip]{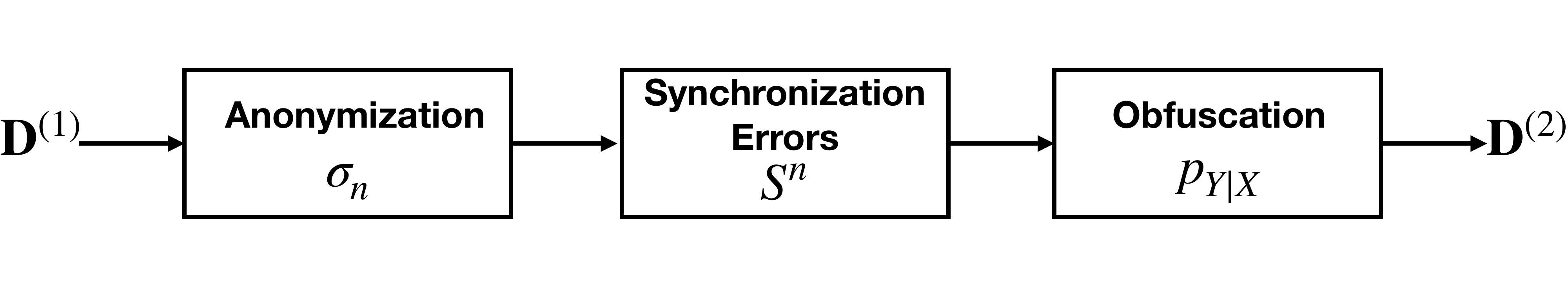}}
\caption{Relation between the unlabeled database $\mathbf{D}^{(1)}$ and the labeled noisy repeated one, $\mathbf{D}^{(2)}$.}
\label{fig:systemmodel}
\end{figure}

The relationship between $\mathbf{D}^{(1)}$ and $\mathbf{D}^{(2)}$, as described in Definition~\ref{defn:labeleddb}, is illustrated in Figure~\ref{fig:systemmodel}.

\begin{rem}{\textbf{(Assumptions)}}
\begin{enumerate}[label=\textbf{(\alph*)}]
    \item The fact that $\smash{X_{i,j}}$ and $\smash{S_i}$ are \emph{i.i.d.} can be checked through the Markov order estimation algorithm of~\cite{morvai2005order} with a probability of error vanishing in $n$. Thus from now on, we assume that the \emph{i.i.d.} nature of $\smash{X_{i,j}}$ and $\smash{S_i}$ is known, while the distributions $p_X$ and $p_S$ are not.
    \item Since $|\mathfrak{X}|$ and $s_{\max}$ do not depend on $n$, they can easily be estimated with a probability of error vanishing in $n$. Therefore, we will assume that $|\mathfrak{X}|$ and $s_{\max}$ are known.
    \item In this work, we assume a memoryless noise model, so that the conditional independence of the noisy replicas stated in \eqref{eq:noiseiid} is known, whereas the noise distribution $p_{Y|X}$ is not.
\end{enumerate}
\end{rem}

As often done in both the graph matching~\cite{shirani2017seeded} and the database matching~\cite{bakirtas2022seeded} literatures, we will assume the availability of a set of already-matched row pairs called \emph{seeds}, to be used in the detection of the underlying repetition pattern.

\begin{defn}{\textbf{(Seeds)}}
    Given a pair of anonymized and labeled correlated databases $(\mathbf{D}^{(1)},\mathbf{D}^{(2)})$, a \emph{seed} is a correctly-matched row pair with the same underlying repetition pattern. A \emph{batch of $\Lambda_n$ seeds} is a pair of seed matrices of respective sizes $\Lambda_n\times n$ and $\Lambda_n\times \sum_{j=1}^n S_j$.
\end{defn}
For the sake of notational brevity, we assume that the seed matrices $\mathbf{G}^{(1)}$ and $\mathbf{G}^{(2)}$ are not submatrices of $\mathbf{D}^{(1)}$ and $\mathbf{D}^{(2)}$. Throughout this work, we will assume a seed size $\Lambda_n=\omega(\log n)=\omega(\log\log m_n)$ which is double-logarithmic with the number of users $m_n$.

As done in~\cite{shirani8849392,bakirtas2021database,bakirtas2022database,bakirtas2022matching,bakirtas2022seeded}, we utilize the database growth rate, defined below, as the main performance metric.
\begin{defn}\label{defn:dbgrowthrate}{\textbf{(Database Growth Rate)}}
The \emph{database growth rate} $R$ of an $(m_n,n,p_X)$ anonymized database is defined as 
\begin{align}
    R&=\lim\limits_{n\to\infty} \frac{1}{n}\log m_n
\end{align}
\end{defn}

Similar to~\cite{shirani8849392,bakirtas2021database,bakirtas2022seeded,bakirtas2022matching}, our goal is to characterize the supremum of the achievable database growth rates allowing the \emph{almost-perfect} recovery of the anonymization function $\sigma_n$. However, unlike~\cite{shirani8849392,bakirtas2021database,bakirtas2022seeded,bakirtas2022matching}, we consider the case \emph{when the underlying distributions $p_X$, $p_{Y|X}$ and $p_S$ are not provided a priori}. More formally, \say{almost-perfect recovery} corresponds to the construction of the estimate $\hat{\sigma}_n$ such that
\begin{align}
\lim\limits_{n\to\infty}\Pr\left(\boldsymbol{\sigma}_n(J)\neq\hat{\sigma}_n(J)\right)&\to 0
\end{align}
where $J\sim\text{Unif}([m_n])$.

\section{De-Anonymization Algorithm and Achievability}
\label{sec:mainresult}
In this section, we present our main result in Theorem~\ref{thm:mainresult} on the achievable database growth rates when no prior information is provided on $p_{X}$, $p_{Y|X}$, and $p_S$.
\begin{thm}{\textbf{(Main Result)}}\label{thm:mainresult}
    Consider an anonymized and labeled correlated database pair, with underlying database distributions $p_{X,Y}$ and a column repetition distribution $p_S$ which are assumed to be not known a priori. Given a seed size $\Lambda_n=\omega(\log n)$, any database growth rate $R$ satisfying
    \begin{align}
        R&<I(X;Y^S|S) \label{eq:mainresult}
    \end{align}
    is achievable where $S\sim p_S$, $X~\sim p_X$ and ${Y^S=Y_1,\dots,Y_S}$ with $Y_i|X\overset{\text{i.i.d.}}{\sim} p_{Y|X}$.
\end{thm}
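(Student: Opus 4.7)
The plan is to reduce the distribution-agnostic problem to the distribution-aware setting by first estimating the column-repetition pattern $S^n$ and the relevant marginals/conditionals from the observed data, and then running a plug-in joint-typicality matcher. The three main ingredients are (i) a distribution-agnostic replica detector, (ii) a seeded outlier-based deletion detector, and (iii) a plug-in typicality test whose estimation error is absorbed into a vanishing typicality slack.

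For the replica detector, I would operate on pairs of consecutive columns of $\mathbf{D}^{(2)}$. If columns $j$ and $j{+}1$ are noisy replicas of the same source column, then the pair $(Y_{i,j},Y_{i,j+1})$ has joint law $\sum_{x}p_X(x)p_{Y|X}(y_1|x)p_{Y|X}(y_2|x)$, whereas if they arise from two distinct source columns the pair is independent with law $p_Y(y_1)p_Y(y_2)$. Since $m_n=2^{nR+o(n)}$ is exponential in $n$, the empirical joint over the $m_n$ row indices concentrates around its mean at a doubly-exponential rate. A simple $\ell_1$ (or empirical mutual-information) test that compares the empirical joint to the product of the empirical marginals therefore decides replica vs.\ non-replica with error vanishing faster than any polynomial in $n$, and a union bound over the at most $n-1$ consecutive pairs yields the correct partition of the columns of $\mathbf{D}^{(2)}$ into replica groups with high probability. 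Crucially, this test requires no knowledge of $p_X$ or $p_{Y|X}$.

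For the deletion detector, I would use the $\Lambda_n=\omega(\log n)$ seed pairs to test, for each pair $(j,k)$ with $j\in[n]$ a source column and $k$ a detected replica group, whether the seed-induced joint $(X_{\ell,j},Y_{\ell,k})$ is consistent with the correlated law $p_X\cdot p_{Y|X}$ or with the independent law $p_X\cdot p_Y$. The natural outlier statistic is the empirical total-variation distance between the joint and the product of the marginals, which concentrates around a strictly positive constant on correct source-to-group alignments (scaling with $I(X;Y)>0$) and around zero otherwise. Since $\Lambda_n=\omega(\log n)$ and the number of candidate pairs is at most $n(n+1)$, Hoeffding's inequality combined with a union bound guarantees recovery of the unique correct alignment with error $o(1)$. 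Source columns left unpaired are declared deletions, giving $\hat S^n=S^n$ with high probability.

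Conditioned on $\hat S^n=S^n$, I estimate $\hat p_{X,Y^s\mid S=s}$ and $\hat p_S$ by empirical averaging, using the seeds together with the now-known replica structure of $\mathbf{D}^{(2)}$; since $|\mathfrak{X}|$ and $s_{\max}$ are fixed, the total-variation estimation errors decay as $O(1/\sqrt{n})$. I then run the joint-typicality matcher from the distribution-aware setting with these plug-in estimates and a slowly shrinking slack $\varepsilon_n\to 0$. Standard typicality arguments bound the probability that any fixed wrong row is jointly typical with $Y_{\sigma_n(J)}$ by $2^{-n(I(X;Y^S|S)-\varepsilon_n)}$, up to plug-in correction terms that vanish uniformly; a union bound over the at most $m_n=2^{nR}$ competing rows then forces the error to zero whenever $R<I(X;Y^S|S)$. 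The step I expect to be most delicate is the deletion detector: proving that the gap between the outlier scores of correct and incorrect source-to-group alignments is bounded below by a strictly positive constant that is independent of $n$ and uniform over admissible $(p_X,p_{Y|X})$, so that the $\omega(\log n)$ seed budget genuinely suffices.
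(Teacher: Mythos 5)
Your proposal is correct in outline and reaches the same conclusion, but it takes a genuinely different route through the two detection lemmas, which is where all the distribution-agnostic difficulty lives. For replica detection, the paper uses the running Hamming distance between consecutive columns of $\mathbf{D}^{(2)}$, observes that it is a two-component binomial mixture with parameters $p_0>p_1$ (guaranteed distinct whenever $p_{X,Y}\neq p_Xp_Y$), and recovers the unknown threshold via Blischke's moment estimator for binomial mixtures; your empirical-joint-versus-product-of-marginals test replaces the moment estimator with a data-independent vanishing threshold $t_n\to 0$, which works because $m_n$ is exponential in $n$, so both are sound. The more interesting divergence is in deletion detection. The paper's statistic there is again a Hamming distance, and its two binomial parameters $q_0,q_1$ can coincide (e.g., uniform $p_X$ with a unit-trace transition matrix), which forces the paper to sweep over bijective remappings $\Phi\in\mathfrak{S}(\mathfrak{X})$ of $\mathbf{G}^{(2)}$ until a ``useful'' one separates the components, and then to use an outlier test against the sample mean because the mixture weight $1-\nicefrac{1}{n}$ rules out parameter estimation. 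Your statistic --- the total-variation distance between the empirical seed-induced joint and the product of its empirical marginals --- concentrates on $\mathrm{TV}(p_{X,Y},p_Xp_Y)>0$ for the correct pair and on $0$ otherwise, so the degenerate case that necessitates the remapping machinery simply cannot occur; this is a real simplification. The delicacy you flag at the end is, correspondingly, not a genuine obstacle: the problem instance fixes $(p_X,p_{Y|X})$, so the gap is a fixed (unknown) positive constant and no uniformity over distributions is needed --- you only need a threshold $t_n\to 0$ with $\Lambda_n t_n^2=\omega(\log n)$ (e.g., $t_n=\Gamma_n^{-1/3}$ for $\Lambda_n=\Gamma_n\log n$), exactly the scaling the paper uses for its threshold $\hat\tau_n$, after which Hoeffding plus a union bound over the $O(n^2)$ candidate pairs closes the argument. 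The final plug-in typicality step is essentially identical to the paper's.
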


In order to demonstrate the tightness of the achievability result stated in Theorem~\ref{thm:mainresult}, we now compare it to the distribution-aware results derived in~\cite[Theorem~1]{bakirtas2022seeded}.

\begin{thm}{\textbf{(Converse of~\cite[Theorem~1]{bakirtas2022seeded})}}\label{rem:converse}
    Consider an anonymized and labeled correlated database pair, with underlying joint database distributions $p_{X,Y}$ and a column repetition distribution $p_S$. Then, a necessary condition for the existence of a successful de-anonymization scheme is:
    \begin{align}
        R&\le I(X;Y^S|S)
    \end{align}
\end{thm}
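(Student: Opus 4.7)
The plan is to apply a standard genie-aided Fano argument. First, I would grant the decoder the repetition pattern $S^n$ and the true distributions $(p_X,p_{Y|X},p_S)$ as side information; any rate upper bound derived for this enhanced decoder is \emph{a fortiori} an upper bound for the original problem. Using that $\sigma_n$ is uniform on permutations of $[m_n]$ and independent of $\mathbf{D}^{(1)}$, $S^n$, and the seeds $\mathbf{G}=(\mathbf{G}^{(1)},\mathbf{G}^{(2)})$ (which are drawn independently of the main databases), I would decompose
\begin{equation}
\log(m_n!)=I(\sigma_n;\mathbf{D}^{(2)}\mid \mathbf{D}^{(1)},S^n,\mathbf{G})+H(\sigma_n\mid \mathbf{D}^{(1)},\mathbf{D}^{(2)},S^n,\mathbf{G}),
\end{equation}
and bound the two terms separately.

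For the mutual information term, the conditional independence of $\mathbf{G}$ from $(\sigma_n,\mathbf{D}^{(1)},\mathbf{D}^{(2)})$ given $S^n$ lets me drop $\mathbf{G}$ from the conditioning without changing the value. I would then apply the standard estimate $I(\sigma_n;\mathbf{D}^{(2)}\mid \mathbf{D}^{(1)},S^n)\le H(\mathbf{D}^{(2)}\mid S^n)-H(\mathbf{D}^{(2)}\mid \sigma_n,\mathbf{D}^{(1)},S^n)$. Conditioned on $S^n$, the rows of $\mathbf{D}^{(2)}$ are i.i.d.\ (a uniform permutation of i.i.d.\ rows) with conditionally independent columns, so $H(\mathbf{D}^{(2)}\mid S^n)=m_n\sum_{j=1}^n H(Y^{S_j}\mid S_j)$, while the memoryless noise assumption \eqref{eq:noiseiid} yields $H(\mathbf{D}^{(2)}\mid \sigma_n,\mathbf{D}^{(1)},S^n)=m_n\sum_{j=1}^n H(Y^{S_j}\mid X,S_j)$. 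Subtracting and averaging over $S^n$ produces the clean per-row bound $I(\sigma_n;\mathbf{D}^{(2)}\mid \mathbf{D}^{(1)},S^n)\le m_n\,n\,I(X;Y^S\mid S)$.

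For the residual-entropy term, I would invoke Fano's inequality one row at a time. Let $P_{e,i}=\Pr(\sigma_n(i)\ne\hat\sigma_n(i))$ and $\bar P_e=\frac{1}{m_n}\sum_i P_{e,i}$, which vanishes by the almost-perfect recovery hypothesis. Per-row Fano gives $H(\sigma_n(i)\mid \hat\sigma_n(i))\le h(P_{e,i})+P_{e,i}\log m_n$. Combining the chain-rule estimate $H(\sigma_n\mid\hat\sigma_n)\le \sum_i H(\sigma_n(i)\mid\hat\sigma_n)\le \sum_i H(\sigma_n(i)\mid\hat\sigma_n(i))$ with concavity of $h(\cdot)$ then yields $H(\sigma_n\mid\hat\sigma_n)\le m_n h(\bar P_e)+m_n\bar P_e\log m_n$. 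Since $\hat\sigma_n$ is a deterministic function of $(\mathbf{D}^{(1)},\mathbf{D}^{(2)},\mathbf{G},S^n)$, the same quantity upper-bounds $H(\sigma_n\mid \mathbf{D}^{(1)},\mathbf{D}^{(2)},S^n,\mathbf{G})$ by data processing.

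Substituting both bounds into the decomposition, using Stirling's estimate $\log(m_n!)\ge m_n\log m_n-m_n\log e$, dividing through by $m_n n$, and letting $n\to\infty$ so that $\bar P_e\to 0$, $h(\bar P_e)/n\to 0$, and $\frac{\log m_n}{n}\to R$, I recover the claimed $R\le I(X;Y^S\mid S)$. The one delicate point I anticipate is the passage from the joint bound to the additive per-row form: a direct Fano on $\sigma_n$ would require the full-permutation error $\Pr(\sigma_n\ne\hat\sigma_n)$ to vanish, whereas the hypothesis only provides a vanishing per-row error, and so the chain-rule plus conditioning-reduces-entropy decomposition above is essential. The remaining steps are routine mutual-information bookkeeping, with the memoryless channel and i.i.d.\ row assumptions doing the heavy lifting in the structural entropy computations.
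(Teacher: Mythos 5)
The paper does not prove this statement: Theorem~\ref{rem:converse} is imported verbatim from~\cite[Theorem~1]{bakirtas2022seeded} as a benchmark against which the achievability result is compared, so there is no in-paper proof to match your argument against. That said, your genie-aided Fano argument is correct and is essentially the standard converse used for this model in the cited line of work: revealing $S^n$ (and the distributions) can only help the decoder; the decomposition $\log(m_n!)=I(\sigma_n;\mathbf{D}^{(2)}\mid\mathbf{D}^{(1)},S^n,\mathbf{G})+H(\sigma_n\mid\mathbf{D}^{(1)},\mathbf{D}^{(2)},S^n,\mathbf{G})$ is valid because $\sigma_n$ is independent of $(\mathbf{D}^{(1)},S^n,\mathbf{G})$; dropping $\mathbf{G}$ from the mutual information is justified since the seeds are, by assumption, disjoint from the databases and conditionally independent of them given $S^n$; and your row-wise Fano step is exactly what is needed to convert the \emph{per-row} error criterion $\Pr(\sigma_n(J)\neq\hat\sigma_n(J))\to0$ into a bound on $H(\sigma_n\mid\hat\sigma_n)$ without requiring the full-permutation error to vanish --- this is indeed the one delicate point, and you handle it correctly via the chain rule, conditioning, and concavity of the binary entropy. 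The only cosmetic caveat is that the identity $H(\mathbf{D}^{(2)}\mid S^n)=m_n\sum_j H(Y^{S_j}\mid S_j)$ should be read as holding after the expectation over $S^n$ (for a fixed realization $s^n$ the entropy is $m_n\sum_j H(Y^{s_j})$); since only an upper bound is needed there, subadditivity would cover any residual dependence in any case.
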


Theorems~\ref{thm:mainresult} and~\ref{rem:converse} imply that given a seed size ${\Lambda_n=\omega(\log n)=\omega(\log\log m_n)}$ we can perform matching as if we knew the underlying distributions $p_{X,Y}$ and $p_S$, and the actual column repetition pattern $S^n$ a priori. Hence in the asymptotic regime, not knowing the distributions causes no loss in the matching capacity.

The rest of this section is on the proof of Theorem~\ref{thm:mainresult}. In Section~\ref{subsec:replicadetection}, we present our detection of noisy replicas algorithm and prove its asymptotic performance. Then in Section~\ref{subsec:deletiondetection}, we propose a seeded deletion algorithm and derive a sufficient seed size that guarantees its asymptotic performance. Finally in Section~\ref{subsec:achievability}, we present our de-anonymization algorithm.

\subsection{Noisy Replica Detection}
\label{subsec:replicadetection}
Similar to~\cite{bakirtas2022seeded}, we use the running Hamming distances between the consecutive columns $\smash{C_j^{(2)}}$ and $\smash{C_{j+1}^{(2)}}$ of $\mathbf{D}^{(2)}$, denoted by $H_j$, $j\in[K_n-1]$, where $K_n\triangleq\sum_{j=1}^n S_j$ as a permutation-invariant future of the labeled correlated database. More formally,
\begin{align}
    H_j & \triangleq \sum\limits_{t=1}^{m_n} \mathbbm{1}_{[D^{(2)}_{t,j+1}\neq D^{(2)}_{t,j}]},\hspace{2em} \forall j\in[K_n-1]\label{eq:RHD}
    \end{align}
We first note that 
\begin{align}
H_j&\sim \begin{cases}
        \text{Binom}(m_n,p_0),& \text{if }C^{(2)}_j \indep C^{(2)}_{j+1} \\
        \text{Binom}(m_n,p_1), & \text{otherwise}
    \end{cases}
\end{align}
where 
\begin{align}
    p_0 &\triangleq 1-\sum\limits_{y\in\mathfrak{X}} p_Y(y)^2\\
    p_1 &\triangleq 1-\sum\limits_{x\in\mathfrak{X}} p_X(x) \sum\limits_{y\in\mathfrak{X}} p_{Y|X}(y|x)^2
\end{align}
From~\cite[Lemma 1]{bakirtas2022database}, we know that as long as the databases are correlated, \emph{i.e.,} $p_{X,Y}\neq p_X p_Y$, we have $p_0>p_1$ for any $p_{X,Y}$. Thus, as long as $p_{X,Y}\neq p_X p_Y$, replicas can be detected based on the Hamming distances $H_j$ similar to~\cite{bakirtas2022seeded,bakirtas2022database}. However, the algorithm in~\cite{bakirtas2022seeded} depends on the choice of a threshold that depends on $p_{X,Y}$ through $p_0$ and $p_1$. In Algorithm~\ref{alg:noisyreplicadetection}, we propose the following modification: We first construct the estimates $\hat{p}_0$ and $\hat{p}_1$ for the respective parameters $p_0$ and $p_1$ through the moment estimator proposed by Blischke in~\cite{blischke1962moment}. Note that we can use this estimator because the Binomial mixture is guaranteed to have two distinct components. More formally, the distribution of $H_j$ conditioned on $S^n$ is given by
    \begin{align}
    \Pr(H_j=h|S^n) &= \binom{m_n}{h} [\alpha p_0^{h} (1-p_0)^{m_n-h}\notag\\
    &\hspace{5em}+(1-\alpha) p_1^h (1-p_1)^{m_n-h}]
\end{align}
for $h=0,\dots,m_n$ where the mixing parameter $\alpha$ is given by
\begin{align}
    \alpha &= \frac{1}{K_n-1}\left(n-\sum\limits_{j=1}^n \mathbbm{1}_{[S_j=0]}\right)
\end{align}
Since $p_S$, and in turn $\delta$ and $\alpha$, are constant in $n$, it can easily be verified that as $n\to\infty$, $\alpha\overset{p}{\to} \frac{1-\delta}{\mathbb{E}[S]}$. Hence it is bounded away from both 0 and 1, suggesting that the moment estimator of~\cite{blischke1962moment} and in turn Algorithm~\ref{alg:noisyreplicadetection} can be used to detect the replicas. More formally, for any $\epsilon>0$.
\begin{align}
   \lim\limits_{n\to\infty} \Pr\left(\left|\alpha-\frac{1-\delta}{\mathbb{E}[S]}\right|>\epsilon\right)&=0.
\end{align}

\begin{algorithm}[t]
\caption{Noisy Replica Detection Algorithm}\label{alg:noisyreplicadetection}
\Input{$(\mathbf{D},m_n,K_n)$}
\Output{isReplica}
$H\gets $ RunningHammingDist($\mathbf{D}$)\Comment*[r]{Eq.~\eqref{eq:RHD}}
$(\hat{p}_0,\hat{p}_1) \gets $EstimateParams$(H)$\Comment*[r]{See~\cite{blischke1962moment}}
$\tau \gets \frac{\hat{p}_0+\hat{p}_1}{2}$\Comment*[r]{Threshold}
isReplica $\gets \varnothing$\;

\For{$j = 1$ \KwTo $K_n-1$}{
  \eIf{$H[j]\le m_n \tau$}{
    isReplica$[j] \gets$ TRUE\;
  }{
      isReplica$[j] \gets $ FALSE\;
    }
  
}
\end{algorithm}

The following lemma states that this algorithm has a vanishing error probability.
\begin{lem}{\textbf{(Noisy Replica Detection)}}\label{lem:replicadetection}
    Algorithm~\ref{alg:noisyreplicadetection} has a vanishing probability of replica detection error, as long as $m_n=\omega(\log n)$.
\end{lem}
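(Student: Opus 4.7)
The plan is to decompose the replica-detection error event into (i) the failure of the moment-estimator producing accurate $(\hat p_0,\hat p_1)$, and (ii) an individual Hamming distance $H_j$ falling on the wrong side of the resulting threshold $\tau$, and then to bound each piece via concentration and a union bound over $j\in[K_n-1]$.

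First, I would argue that the data-driven threshold $\tau$ is, with high probability, well separated from both $p_0$ and $p_1$. Conditional on $S^n$, the sequence $\{H_j\}_{j=1}^{K_n-1}$ consists of realizations of a two-component Binomial mixture with mixing parameter $\alpha$ that, as already shown in the excerpt, converges in probability to $\frac{1-\delta}{\mathbb{E}[S]}\in(0,1)$ and is therefore bounded away from $0$ and $1$. Since $p_0\ne p_1$ by \cite[Lemma~1]{bakirtas2022database} and the number of samples $K_n-1$ grows linearly in $n$, the consistency of the Blischke moment estimator~\cite{blischke1962moment} yields $\hat p_0\overset{p}{\to}p_0$ and $\hat p_1\overset{p}{\to}p_1$. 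Fixing $\eta<(p_0-p_1)/4$, the event $\mathcal{E}_\eta\triangleq\{|\hat p_0-p_0|\le\eta,\,|\hat p_1-p_1|\le\eta\}$ therefore holds with probability $1-o(1)$, and on $\mathcal{E}_\eta$ the threshold $\tau=(\hat p_0+\hat p_1)/2$ satisfies $p_1+\eta<\tau<p_0-\eta$.

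Next, I would bound the per-index misclassification probability. Conditional on $S^n$ and on whether columns $j$ and $j+1$ of $\mathbf{D}^{(2)}$ are independent or are noisy replicas, the Hamming distance $H_j$ is a sum of $m_n$ i.i.d.\ Bernoulli indicators (across rows) with mean $p_0$ or $p_1$, respectively. Hoeffding's inequality then gives
\[
\Pr\!\left(|H_j/m_n-p_i|>\tfrac{p_0-p_1}{4}\,\Big|\,S^n\right)\le 2\exp(-c\,m_n),\quad i\in\{0,1\},
\]
for an absolute constant $c>0$. On $\mathcal{E}_\eta$, a deviation of $H_j/m_n$ from its mean by at most $(p_0-p_1)/4$ implies correct classification against $\tau$, so each index contributes at most $2\exp(-c\,m_n)$ to the error.

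Finally, since $K_n\le s_{\max}\,n$ deterministically, a union bound over $j\in[K_n-1]$ yields total error probability at most $\Pr(\mathcal{E}_\eta^c)+2s_{\max}\,n\exp(-c\,m_n)$, which vanishes precisely when $m_n=\omega(\log n)$. The main obstacle is the first step: because $\tau$ is a random function of the same $H_j$'s being classified, one must decouple the threshold from the individual classification decisions; the consistency of the Blischke estimator, together with the fact that the mixing weight $\alpha$ stays bounded away from $0$ and $1$, is exactly what enables this decoupling. The remainder reduces to a standard Chernoff-and-union-bound argument.
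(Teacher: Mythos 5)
Your proposal is correct and follows essentially the same route as the paper: consistency of Blischke's moment estimator (enabled by the mixing weight $\alpha$ being bounded away from $0$ and $1$) to justify the data-driven threshold, then a per-index concentration bound (you use Hoeffding where the paper uses the Chernoff bound in relative-entropy form) and a union bound over the $O(n)$ column pairs, which vanishes exactly when $m_n=\omega(\log n)$. If anything, your explicit good event $\mathcal{E}_\eta$ decouples the random threshold $\tau$ from the individual classifications slightly more carefully than the paper, which simply invokes $\tau\overset{p}{\to}(p_0+p_1)/2$ before applying the Chernoff bounds.
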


\begin{proof}
The estimator proposed in~\cite{blischke1962moment} works as follows: Define the $k$\textsuperscript{th} sample factorial moment $F_k$ as
\begin{align}
    F_k&\triangleq \frac{1}{K_n-1} \sum\limits_{j=1}^{K_n-1} \prod\limits_{i=0}^{k-1} \frac{H_j-i}{m_n-i},\hspace{1em} \forall k\in[m_n]\label{eq:factorialmoments}
\end{align}
and let 
\begin{align}
    A &\triangleq \frac{F_3-F_1 F_2}{F_2-F_1^2}
\end{align}
Then the respective estimators $\hat{p}_0$ and $\hat{p}_1$ for $p_0$ and $p_1$ can be constructed as:
\begin{align}
    \hat{p}_0 &= \frac{A+\sqrt{A^2-4 A F_1 + 4 F_2}}{2}\\
    \hat{p}_1 &= \frac{A-\sqrt{A^2-4 A F_1 + 4 F_2}}{2}\label{eq:paramest}
\end{align}
From~\cite{blischke1962moment}, we get $\hat{p}_i\overset{p}{\to}p_i$ and in turn $\tau\overset{p}{\to}\frac{p_0+p_1}{2}$. Thus for large $n$, $\tau$ is bounded away from $p_0$ and $p_1$. At this stage, we are ready to finish the proof following the same steps taken in the proof of~\cite[Lemma~1]{bakirtas2023database}, which we provide below for the sake of completeness.

Let $A_j$ denote the event that $\smash{C^{(2)}_{j}}$ and $\smash{C^{(2)}_{j+1}}$ are noisy replicas and $B_j$ denote the event that the algorithm infers $\smash{C^{(2)}_{j}}$ and $\smash{C^{(2)}_{j+1}}$ as replicas. Via the union bound, we can upper bound the total probability of replica detection error as
\begin{align}
    \Pr(\bigcup\limits_{j=1}^{{K_n}-1} E_j)&\le \sum\limits_{j=1}^{{K_n}-1} \Pr(A_j ^c) \Pr(B_j|A_j ^c)+ \Pr(A_j)  \Pr(B_j^c|A_j)\label{eq:replicadetectionbound}
\end{align}
where $E_j$ denotes the replica detection event for $\smash{C^{(2)}_j}$ and $\smash{C^{(2)}_{j+1}}$.

Observe that conditioned on $A_j^c$, $H_j\sim\text{Binom}\smash{(m_n,p_0)}$ and conditioned on $A_j$, $H_j\sim\text{Binom}(m_n,p_1)$. Then, from the Chernoff bound~\cite[Lemma 4.7.2]{ash2012information}, we get
\begin{align}
    \Pr(B_j|A_j ^c)&\le 2^{-m_n D\left(\tau\|\smash{p_0}\right)}\label{eq:chernoff1}\\
    \Pr(B_j^c|A_j)&\le 2^{-m_n D\left(1-\tau \|1-p_1\right)}\label{eq:chernoff2}
\end{align}

Thus, we obtain
\begin{align}
    \Pr(\bigcup\limits_{j=1}^{{K_n}-1} E_j)&\le ({K_n}-1)\left[ 2^{-m_n D\left(\tau\|\smash{p_0}\right)}+ 2^{-m_n D\left(1-\tau\|1-p_1\right)}\right]\label{eq:replicadetectionlast}
\end{align}

Since the RHS of \eqref{eq:replicadetectionlast} has $2{K_n}=O(n)$ terms decaying exponentially in~$m_n$, for any $m_n=\omega(\log n)$ we have 
\begin{align}
    \Pr(\bigcup\limits_{j=1}^{{K_n}-1} E_j) \to 0 \:\text{ as } n\to\infty.
\end{align}

Observing that $n\sim \log m_n$, we get
\begin{align}
    \lim\limits_{n\to \infty}\Pr(\text{Noisy replica detection error})&= 0.
\end{align}
\end{proof}

Note that the condition in Lemma~\ref{lem:replicadetection} is automatically satisfied since $m_n$ is exponential in $n$ (Definition~\ref{defn:dbgrowthrate}). Finally, we stress that as opposed to deletion detection, discussed in Section~\ref{subsec:deletiondetection}, no seeds are necessary for replica detection.

\subsection{Deletion Detection}
\label{subsec:deletiondetection}
In this section, we propose a deletion detection algorithm that utilizes the seeds. Since the replica detection algorithm of Section~\ref{subsec:replicadetection} (Algorithm~\ref{alg:noisyreplicadetection}) has a vanishing probability of error, for notational simplicity we will focus on a deletion-only setting throughout this subsection. Let $\mathbf{G}^{(1)}$ and $\mathbf{G}^{(2)}$ be the seed matrices with respective sizes $\Lambda_n\times n$ and $\Lambda_n\times \Tilde{K}_n$, and denote the $j$\textsuperscript{th} column of $\mathbf{G}^{(r)}$ with $G_j^{(r)}$, $r=1,2$ where $\Tilde{K}_n\triangleq\sum_{j=1}^n \mathbbm{1}_{[S_j\neq 0]}$. Furthermore, for the sake of brevity, let $L_{i,j}$ denote the Hamming distance between $\smash{G^{(1)}_i}$ and $\smash{G^{(2)}_j}$ for $(i,j)\in[n]\times [\Tilde{K}_n]$. More formally, let 
\begin{align}
    L_{i,j}&\triangleq \sum\limits_{t=1}^{\Lambda_n} \mathbbm{1}_{[G^{(1)}_{t,i}\neq G^{(2)}_{t,j}]}\label{eq:crosshammingdist}
\end{align}
Observe that 
\begin{align}
L_{i,j}&\sim \begin{cases}
        \text{Binom}(\Lambda_n,q_0),& G^{(1)}_i \indep G^{(2)}_{j} \\
        \text{Binom}(\Lambda_n,q_1), & \text{otherwise}
    \end{cases}
\end{align}where
\begin{align}
    q_0 &= 1-\sum\limits_{x\in\mathfrak{X}} p_X(x) p_Y(x)\\
    q_1 &= 1-\sum\limits_{x\in\mathfrak{X}} p_{X,Y}(x,x)
\end{align}

Thus, we have a problem seemingly similar to the one in Section~\ref{subsec:replicadetection}. However, we cannot utilize similar tools because of the following: 
\emph{i)} Recall that the two components of the Binomial mixture discussed in Section~\ref{subsec:replicadetection} were distinct for any underlying joint distribution $p_{X,Y}$ as long as the databases are correlated, \emph{i.e.,} $p_{X,Y}\neq p_X p_Y$. Unfortunately, the same idea does not automatically work here as demonstrated by the following example: Suppose $X_{i,j}\sim\text{Unif}(\mathfrak{X})$, and the transition matrix $\mathbf{P}$ associated with $p_{Y|X}$ has unit trace. Then, 
\begin{align}
    q_0-q_1&=\sum\limits_{x\in\mathfrak{X}} p_{X,Y}(x,x) - p_X(x) p_Y(x)\\
    &= \frac{1}{|\mathfrak{X}|} \sum\limits_{x\in\mathfrak{X}} p_{Y|X}(x|x) - p_Y(x)\\
    &= \frac{1}{|\mathfrak{X}|}(\text{tr}(\mathbf{P})-1)\\
    &= 0
\end{align}
In~\cite{bakirtas2022seeded}, we overcame this problem using the following modification: Based on $p_{X,Y}$, we picked a bijective remapping $\Phi\in \mathfrak{S}(\mathfrak{X})$ and applied it to all the entries of $\mathbf{G}^{(2)}$ before computing the Hamming distances $L_{i,j}$, where $\mathfrak{S}(\mathfrak{X})$ denotes the symmetry group of $\mathfrak{X}$. Denoting the resulting version of the Hamming distance $L_{i,j}$ by $L_{i,j}(\Phi)$, we proved in~\cite[Lemma 2]{bakirtas2022seeded} that there as long as $p_{X,Y}\neq p_X p_Y$, there exists $\Phi\in \mathfrak{S}(\mathfrak{X})$ such that the Binomial mixture distribution associated with $L_{i,j}(\Phi)$ has two distinct components with respective success parameters $q_0(\Phi)$ and $q_1(\Phi)$. In other words, we have
\begin{align}
L_{i,j}(\Phi)&\sim \begin{cases}
        \text{Binom}(m_n,q_0(\Phi)),& G^{(1)}_i \indep G^{(2)}_{j} \\
        \text{Binom}(m_n,q_1(\Phi)), & \text{otherwise}
    \end{cases}
    \label{eq:Ldist}
\end{align}
and $q_0(\Phi)\neq q_1(\Phi)$. We will call such $\Phi$ a \emph{useful remapping}. 

\emph{ii)} In the known-distribution setting, we chose the useful remapping $\Phi$ and threshold $\tau_n$ for Hamming distances based on $p_{X,Y}$. In Section~\ref{subsec:replicadetection}, we solved the distribution-agnostic case via parameter estimation in Binomial mixtures. However, the same approach does not work here. Suppose the $j$\textsuperscript{th} retained column $\smash{G_{j}^{(2)}}$ of $\mathbf{G}^{(2)}$ is correlated with $\smash{G_{r_j}^{(1)}}$. Then the $j$\textsuperscript{th} column of $\mathbf{L}(\Phi)$ will have a Binom($\Lambda_n,q_1(\Phi)$) component in the $r_j$\textsuperscript{th} row, whereas the remaining $n-1$ rows will contain Binom($\Lambda_n,q_0(\Phi)$) components, as described in~\eqref{eq:Ldist} and illustrated in Figure~\ref{fig:deletiondetection}. 
Hence, it can be seen that the mixture parameter $\beta$ of this Binomial mixture distribution approaches 1 since
\begin{align}
    \beta &= \frac{(n-1)\Tilde{K}_n}{n \Tilde{K}_n} = 1-\frac{1}{n}
\end{align}
This imbalance prevents us from performing a parameter estimation as done in Algorithm~\ref{alg:noisyreplicadetection}.
\begin{figure}[t]
\centerline{\includegraphics[width=0.5\textwidth]{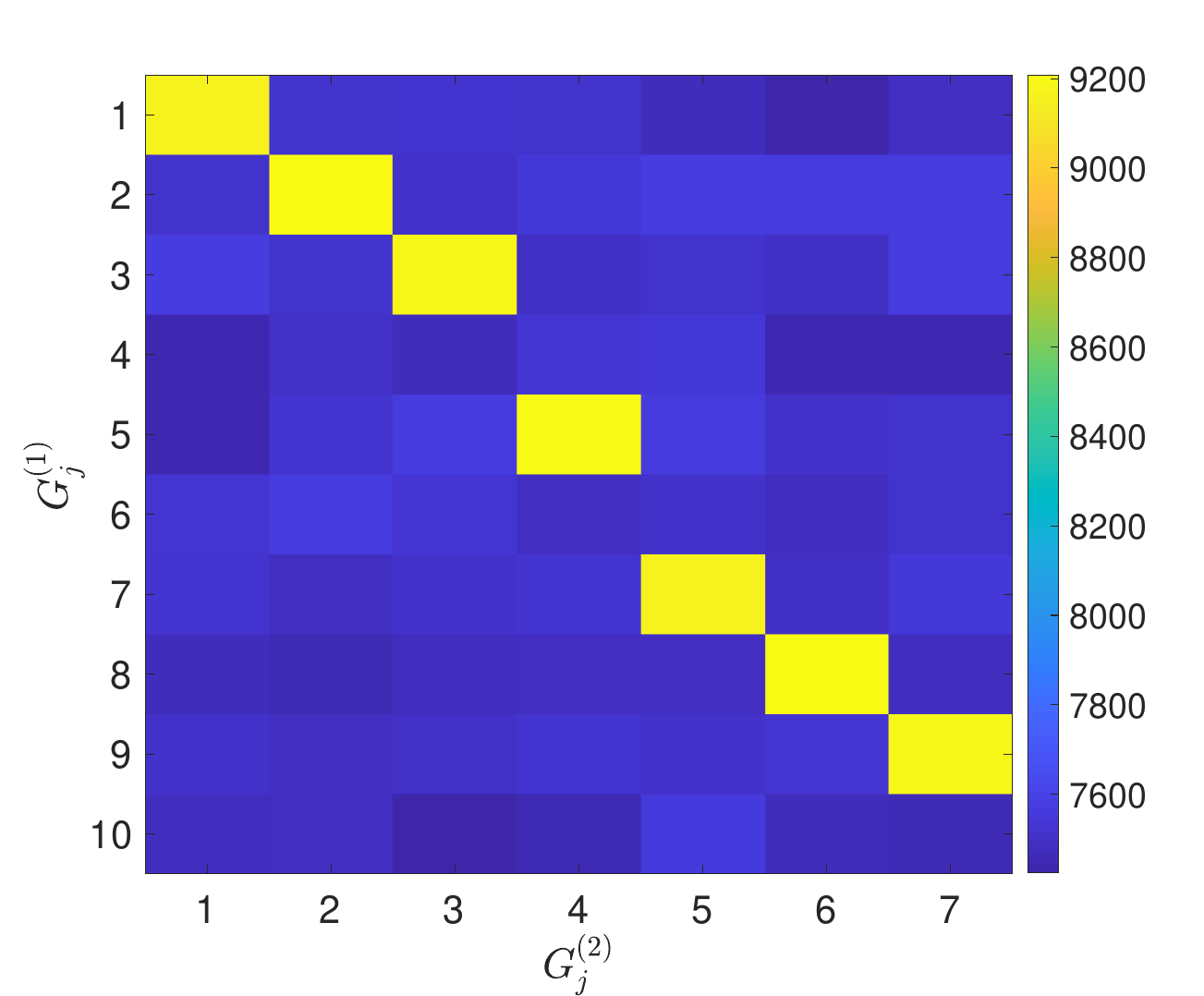}}
\caption{Hamming distances between the columns of $\mathbf{G}^{(1)}$ and $\mathbf{G}^{(2)}$ with $n=10$, $\Tilde{K}_n=7$ and $\Lambda_n=10^4$ for $q_0\approx 0.76$ and $q_1\approx 0.92$. The $(i,j)$\textsuperscript{th} element corresponds to $L_{i,j}$, with the color bar indicating the approximate values. It can be seen that there are no outliers in the $4$\textsuperscript{th}, $6$\textsuperscript{th}, and $10$\textsuperscript{th} rows. Hence, it can be inferred that $I_{\text{del}}=(4,6,10)$.}
\label{fig:deletiondetection}
\end{figure}

We propose to exploit the aforementioned observation that for a useful mapping $\Phi$, in each column of $\mathbf{L}(\Phi)$, there is exactly one element with a different underlying distribution, while the remaining $n-1$ entries are \emph{i.i.d.}, rendering this entry an \emph{outlier}. Note that $L_{i,j}(\Phi)$ being an outlier corresponds to $G_i^{(1)}$ and $G_j^{(2)}$ being correlated, and in turn $S_i\neq 0$. On the other hand, we stress that the lack of outliers in any given column of $\mathbf{L}(\Phi)$ implies that $\Phi$ is useless. Thus, it can easily be seen that Algorithm~\ref{alg:deletiondetection} is capable of deciding whether a given remapping is useful or not. In fact, the algorithm sweeps over all elements of $\mathfrak{S}(\mathfrak{X})$ until we encounter a useful one.

To detect the outliers in $\mathbf{L}(\Phi)$, we propose to use the distances of $L_{i,j}(\Phi)$ to the sample mean $\mu(\Phi)$ of $\mathbf{L}(\Phi)$ where
\begin{align}
    \mu(\Phi) &\triangleq \frac{1}{n \Tilde{K}_n}\sum\limits_{i=1}^{n} \sum\limits_{j=1}^{\Tilde{K}_n} L_{i,j}(\Phi)\label{eq:samplemean}
\end{align}
As given in Algorithm~\ref{alg:deletiondetection}, if these distances are lower than $\hat{\tau}_n$, we detect retention \emph{i.e.,} non-deletion.

Note that this step is equivalent to utilizing
Z-scores (also known as standard scores), a well-studied concept in statistical outlier detection~\cite{moore2007basic}, where the distances to the sample mean are also divided by the sample standard deviation. In Algorithm~\ref{alg:deletiondetection}, for the sake of brevity, we will avoid such division.

\begin{algorithm}[t]
\caption{Seeded Deletion Detection Algorithm}\label{alg:deletiondetection}
\Input{$(\mathbf{G}^{(1)},\mathbf{G}^{(2)},\Lambda_n,n, \Tilde{K}_n,\mathfrak{X})$}
\Output{retentionIdx}
$\mathfrak{S}(\mathfrak{X})\gets$ SymmetryGroup($\mathfrak{X}$)\;

$\hat{\tau}_n\gets 2 \Lambda_n^{2/3} (\log n)^{1/3}$\Comment*[r]{Threshold}

\For{$s \gets 1$ \KwTo $|\mathfrak{X}|!$}{
retentionIdx$\gets\varnothing$\;
$\Phi\gets\mathfrak{S}(\mathfrak{X})[s]$\Comment*[r]{Pick a remapping.} 
$\mathbf{L}(\Phi) \gets $ HammDist($\mathbf{G}^{(1)},\mathbf{G}^{(2)}$)\Comment*[r]{Eq. \eqref{eq:crosshammingdist}}
$\mu(\Phi)\gets$ SampleMean($\mathbf{L}(\Phi) $)\Comment*[r]{Eq.~\eqref{eq:samplemean}}
$\mathbf{M}(\Phi) \gets |\mathbf{L}(\Phi)-\mu(\Phi)|$\;

\For{$j\gets1$ \KwTo $\Tilde{K}_n$}{
count $\gets0$\; 
\For{$i\gets1$ \KwTo $n$}{
\If{$\mathbf{M}(\Phi)[i][j]\le\hat{\tau}_n$}{
retentionIdx$\gets$ retentionIdx $\cup \{i\} $\;
count $\gets$ count $+$ $1$\;
}

}
\blue{/* count = 0: no outliers ($\Phi$ is useless). */}\\
\blue{/* count $>$ 1: misdetection. */}\\
\eIf{$\textup{count}>1$}{\Return ERROR}{\If{$\textup{count}=0$}{Skip to next $\Phi$\;}}
}
\Return $\hat{I}_R$\;
}

\end{algorithm}

The following lemma states that for sufficient seed size, $\Lambda_n=\omega(\log n)= \omega(\log\log m_n)$, Algorithm~\ref{alg:deletiondetection} works correctly with high probability.
\begin{lem}{\textbf{(Deletion Detection)}}
    Let $I_R=\{j\in[n]:\: S_j\neq 0\}$ be the true retention index set and $\hat{I}_R$ be its estimate output by Algorithm~\ref{alg:deletiondetection}. Then for any seed size $\Lambda_n=\omega(\log n)$, we have
    \begin{align}
        \lim\limits_{n\to\infty} \Pr(\hat{I}_R = I_R) &= 1
    \end{align}
\end{lem}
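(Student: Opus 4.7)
The strategy is to exploit the correlation assumption to reduce the problem to a clean outlier-detection task, then tune the threshold to lie strictly between the noise floor and the signal. By Lemma 2 of~\cite{bakirtas2022seeded}, $p_{X,Y}\neq p_X p_Y$ guarantees at least one \emph{useful} remapping $\Phi^\star\in\mathfrak{S}(\mathfrak{X})$ with $q_0(\Phi^\star)\neq q_1(\Phi^\star)$. I will show that with probability $1-o(1)$, every remapping $\Phi$ visited during the sweep in Algorithm~\ref{alg:deletiondetection} behaves correctly: useless $\Phi$ (with $q_0(\Phi)=q_1(\Phi)$) produce count $=0$ at the very first column and are skipped without polluting \textit{retentionIdx}, while any useful $\Phi$ produces count $=1$ in every column with the unique flagged row equal to the correct match $r_j$. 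Since $|\mathfrak{X}|!$ is constant in $n$, a union bound over the finitely many remappings and over the corresponding $O(n^2)$ entries will yield $\hat{I}_R = I_R$.

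Two Hoeffding-type concentrations drive the argument. First, a direct computation gives $\mathbb{E}[\mu(\Phi)] = \Lambda_n q_0(\Phi) + \Lambda_n(q_1(\Phi)-q_0(\Phi))/n = \Lambda_n q_0(\Phi) + O(\Lambda_n/n)$, and applying Hoeffding to the average over $n\tilde{K}_n$ bounded summands gives $|\mu(\Phi) - \mathbb{E}[\mu(\Phi)]| = o(\sqrt{\Lambda_n \log n})$ w.h.p., where I use that $\tilde{K}_n = \Theta(n)$ w.h.p.\ by the weak law of large numbers applied to $\mathbbm{1}_{[S_j\ne 0]}$ (valid since $p_S(0)<1$). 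Second, per-entry Hoeffding yields $|L_{i,j}(\Phi) - \mathbb{E}[L_{i,j}(\Phi)]| = O(\sqrt{\Lambda_n \log n})$ simultaneously over all $(i,j,\Phi)$, via a union bound over the $O(n^2)$ entries and the $|\mathfrak{X}|!$ remappings.

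Combining the two, for any useless $\Phi$ every entry's mean equals $\Lambda_n q_0(\Phi)$, so $|L_{i,j}(\Phi) - \mu(\Phi)| = O(\sqrt{\Lambda_n\log n}) + O(\Lambda_n/n) = o(\hat{\tau}_n)$ uniformly, and no row is flagged. For a useful $\Phi$ and each column $j$, the matching entry satisfies $|L_{r_j,j}(\Phi) - \mu(\Phi)| = \Lambda_n|q_1(\Phi)-q_0(\Phi)|(1-1/n) \pm O(\sqrt{\Lambda_n\log n}) = \Theta(\Lambda_n)$, which exceeds $\hat{\tau}_n$, while the other $n-1$ entries in the column behave as in the useless case and stay well below $\hat{\tau}_n$. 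Hence count $=1$ in every column and the unique flagged row is $r_j$, as required.

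The main obstacle is the threshold design. The choice $\hat{\tau}_n = 2\Lambda_n^{2/3}(\log n)^{1/3}$ must sit strictly between the per-entry fluctuation scale $\Theta(\sqrt{\Lambda_n\log n})$ and the signal scale $\Theta(\Lambda_n)$; both required inclusions $\hat{\tau}_n = \omega(\sqrt{\Lambda_n\log n})$ and $\hat{\tau}_n = o(\Lambda_n)$ reduce algebraically to $\Lambda_n = \omega(\log n)$, precisely the lemma's hypothesis. A secondary subtlety is that the $O(\Lambda_n/n)$ bias of $\mu(\Phi)$ relative to $\Lambda_n q_0(\Phi)$ for useful $\Phi$ (induced by the single $q_1$-component per column carrying weight $1/n$) is automatically dominated by $\hat{\tau}_n$ in the same regime, so it requires no separate argument.
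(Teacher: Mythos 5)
Your proposal is correct and follows essentially the same route as the paper's proof: concentrate each $L_{i,j}(\Phi)$ and the sample mean $\mu(\Phi)$, sandwich the threshold $\hat{\tau}_n=2\Lambda_n^{2/3}(\log n)^{1/3}$ strictly between the fluctuation scale and the signal scale $\Theta(\Lambda_n)$ (both reductions to $\Lambda_n=\omega(\log n)$ check out), and union-bound over the $O(n^2)$ entries and the constantly many remappings, treating useless remappings via the absence of outliers. The only differences are cosmetic: you use Hoeffding where the paper uses a Chernoff bound with a second-order expansion of the binary divergence (and Chebyshev for $\mu(\Phi)$), and you explicitly account for the $O(\Lambda_n/n)$ bias of $\mathbb{E}[\mu(\Phi)]$ away from $\Lambda_n q_0(\Phi)$, a point the paper glosses over.
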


\begin{proof}
    For now, suppose that $\Phi$ is a useful remapping.
    Start by observing that using Chebyshev's inequality~\cite[Theorem 4.2]{wasserman2004all} it is straightforward to prove that for any $\epsilon_n>0$
\begin{align}
    \gamma&\triangleq \Pr(|\mu(\Phi)-\Lambda_n q_0(\Phi)|>\Lambda_n {\epsilon}_n) \le O\left(\frac{1}{K_n n \Lambda_n {\epsilon}_n}\right)\label{eq:alpha}
\end{align}
Let $I_R=\{r_1,\dots,r_{\tilde{K}_n}\}$ and note that $L_{i,j}\sim\text{Binom}(\Lambda_n,q_1(\Phi))$. Thus, from the Chernoff bound~\cite[Lemma 4.7.2]{ash2012information} we get
\begin{align}
    \beta_{r_j,j}&\triangleq\Pr(|L_{r_j,j}(\Phi)-\Lambda_n q_1(\Phi)|\ge {\epsilon}_n \Lambda_n)\\
    &\le 2^{-\Lambda_n D(q_1(\Phi)-{\epsilon}_n\|q_1(\Phi))}
    \notag\\
    &\hspace{3em}+ 2^{-\Lambda_n D(1-q_1(\Phi)-{\epsilon}_n\|1-q_1(\Phi))} \label{eq:beta}
\end{align}
where $D(p\|q)$ denotes the relative entropy~\cite[Chapter 2.3]{cover2006elements} (in bits)  between two Bernoulli distributions with respective parameters $p$ and $q$. 

Now, for notational brevity, let 
\begin{align}
    f(\epsilon) &\triangleq D(q-\epsilon\|q)\\
    g(\epsilon) &\triangleq D(1-q-\epsilon\|1-q)
\end{align}
Then, one can simply verify the following
\begin{align}
    f^\prime(\epsilon) &= \log\frac{q}{1-q}-\log\frac{q-\epsilon}{1-q-\epsilon}\\
    f^{\prime\prime}(\epsilon)&=\frac{1}{\log e} \left[\frac{1}{q-\epsilon} + \frac{1}{1-q+\epsilon}\right]\\
    g^\prime(\epsilon) &= \log\frac{1-q}{q}-\log\frac{1-q-\epsilon}{q+\epsilon}\\
    g^{\prime\prime}(\epsilon)&=\frac{1}{\log e} \left[ \frac{1}{1-q-\epsilon} + \frac{1}{q+\epsilon}\right]
\end{align}
Observing that 
\begin{align}
    f(0)&=f^\prime(0)=0\\
    g(0)&=g^\prime(0)=0
\end{align}
and performing second-order MacLaurin Series expansions on $f$ and $g$, we get for any $\epsilon<1$
\begin{align}
    f(\epsilon) &= c(q) \epsilon^2 + O(\epsilon^3)\\
    g(\epsilon) &= c(q) \epsilon^2  + O(\epsilon^3)
\end{align}
where
\begin{align}
    c(q)&\triangleq\frac{1}{\log e} \left[\frac{1}{q} + \frac{1}{1-q}\right]
\end{align}

Let $\Lambda_n = \Gamma_n \log n$ and $\epsilon_n =\Gamma_n^{-\nicefrac{1}{3}}$ and pick the threshold as $\hat{\tau}_n = 2\Lambda_n \epsilon_n$. Observe that since $\Gamma_n=\omega_n(1)$, we get
\begin{align}
    \hat{\tau}_n &= 2 \Lambda_n \epsilon_n  = o_n(\Lambda_n)\\
    \Lambda_n \epsilon_n^2 &= \Gamma_n^{\nicefrac{1}{3}} \log n = \omega_n(\log n)
\end{align}
Then, we have
\begin{align}
    \beta_{r_j,j} &\le 2^{1-\Lambda_n (c(q_1(\Phi)) \epsilon_{n}^2+O(\epsilon_n^3))}\\
    &= 2^{1- c(q_1(\Phi))  \Gamma_n^{\nicefrac{1}{3}} \log n+O(\epsilon_n^3))}
\end{align}
Note that with probability at least $1-\gamma-\beta_{r_j,j}$ we have
\begin{align}
    |\mu(\Phi)-\Lambda_n q_0(\Phi)|&\le \Lambda_n\epsilon_n\\
    |L_{r_j,j}(\Phi)-\Lambda_n q_1(\Phi)|&\ge \Lambda_n\epsilon_n
\end{align}
From the triangle inequality, we have
\begin{align}
    |L_{r_j,j}(\Phi)-\mu(\Phi)| &\ge \Lambda_n (|q_1(\Phi)-q_0(\Phi)|-2\epsilon_n)\\
    &\ge \hat{\tau}_n
\end{align}
for large $n$.
Therefore, from the union bound we have
\begin{align}
    \Pr(\exists j\in[\tilde{K}_n]:&|L_{r_j,j}-\mu(\Phi)|\le \hat{\tau}_n)\\
    &\le \gamma + \sum\limits_{j=1}^{\tilde{K}_n} \beta_{r_j,j}\\
    &= \gamma + 2^{\log \tilde{K}_n - 1- c(q_1(\Phi)) \Gamma_n^{\nicefrac{1}{3}} \log n +O(\epsilon_n^3))}
\end{align}
Since $\tilde{K}_n\le n$ and $\Lambda_n=\omega_n(\log n)$, we have
\begin{align}
   \lim\limits_{n\to\infty} \log \tilde{K}_n -c(q_1(\Phi)) \Gamma_n^{\nicefrac{1}{3}} \log n
   &= -\infty
\end{align}
Thus we have
\begin{align}
    \lim\limits_{n\to\infty} \Pr(\exists j\in[\tilde{K}_n]:M_{r_j,j}\le \hat{\tau}_n) &= 0
\end{align}

Next, we look at $i\neq r_j$. Repeating the same steps above, we get
\begin{align}
    \beta_{i,j}&\triangleq\Pr(|L_{i,j}(\Phi)-\Lambda_n q_0(\Phi)|\ge \epsilon_n \Lambda_n)\\
    &\le 2^{-\Lambda_n D(q_0(\Phi)-\epsilon_n\|q_0(\Phi))}
    + 2^{-\Lambda_n D(1-q_0(\Phi)-\epsilon_n\|1-q_0\Phi))}\\
    &= 2^{1- c(q_0(\Phi)) \Gamma_n^{\nicefrac{1}{3}} \log n+O(\epsilon_n^3))}
\end{align}
Again, from the triangle inequality, we get
\begin{align}
    |L_{i,j}(\Phi)-\mu(\Phi)|&\le 2\epsilon_n= \hat{\tau}_n
\end{align}
From the union bound, we obtain
\begin{align}
    \Pr(\exists j&\in[\Tilde{K}_n]\:\exists i\in[n]\setminus\{r_j\}: |L_{i,j}(\Phi)-\mu(\Phi)|\ge \hat{\tau}_n) \\
    &\le \gamma + \sum\limits_{j=1}^{\tilde{K}_n}\sum\limits_{i\neq r_j} \beta_{i,j}\\
    &\le \gamma + n^2 2^{1- c(q_0(\Phi))  \Gamma_n^{\nicefrac{1}{3}} \log n+O(\epsilon_n^3))}
\end{align}
Since $\Lambda_n=\omega(\log n)$, we have 
\begin{align}
   \lim\limits_{n\to\infty} \Pr(\exists j&\in[\Tilde{K}_n]\:\exists i\in[n]\setminus\{r_j\}: |L_{i,j}(\Phi)-\mu(\Phi)|\ge \hat{\tau}_n)=0
\end{align}
Thus, for any useful remapping $\Phi$, the misdetection probability decays to zero as $n\to \infty$.

For any \emph{useless} remapping $\Phi$, following the same steps, one can prove that
\begin{align}
    \Pr(&\text{Useless} \text{ remapping }\Phi \text{ is inferred as useful.})\\
    &\le \gamma + \sum\limits_{i=1}^{n}\sum\limits_{j=1}^{\Tilde{K}_n} \Pr(M_{i,j}\ge \epsilon_n \Lambda_n)\\
    &\le \gamma + n^2 2^{1- c(q_0(\Phi))  \Gamma_n^{\nicefrac{1}{3}} \log n+O(\epsilon_n^3))}\\
    &= o_n(1)
\end{align}
Observing $|\mathfrak{S}(\mathfrak{X})|=|\mathfrak{X}|!=O_n(1)$ concludes the proof.
\end{proof}

\subsection{De-Anonymization Scheme}
\label{subsec:achievability}
In this section, we propose a de-anonymization scheme by combining the detection algorithms Algorithm~\ref{alg:noisyreplicadetection} and Algorithm~\ref{alg:deletiondetection}, and performing a modified version of the typicality-based scheme proposed in~\cite{bakirtas2022seeded}. Then using this scheme we prove the achievability of Theorem~\ref{thm:mainresult}. 

Given the database pair $(\mathbf{D}^{(1)},\mathbf{D}^{(2)})$ and the corresponding seed matrices $(\mathbf{G}^{(1)},\mathbf{G}^{(2)})$, the de-anonymization scheme we propose is as follows: 
\begin{enumerate}[label=\textbf{ \arabic*)},leftmargin=1.3\parindent]
    \item Detect the replicas through Algorithm~\ref{alg:noisyreplicadetection}.
    \item Remove all the extra replica columns from the seed matrix $\mathbf{G}^{(2)}$ to obtain $\tilde{\mathbf{G}}^{(2)}$ and perform seeded deletion detection via Algorithm~\ref{alg:deletiondetection} using $\mathbf{G}^{(1)},\tilde{\mathbf{G}}^{(2)}$. At this step, we have an estimate $\hat{S}^n$ of the column repetition pattern $S^n$.
    \item Based on $\hat{S}^n$ and the matching entries in $\mathbf{G}^{(1)},\tilde{\mathbf{G}}^{(2)}$, obtain an estimate $\hat{p}_{X,Y^S|S}$ of $p_{X,Y^S|S}$
    where
    \begin{align}
     \hat{p}_X(x)&\triangleq \frac{1}{\Lambda_n n} \sum\limits_{i=1}^{\Lambda_n}\sum\limits_{j=1}^{n}\mathbbm{1}_{[G_{i,j}^{(1)}=x]},\hspace{2em}\forall x\in\mathfrak{X}\\
     \hat{p}_{Y|X}(y|x) &=\frac{\sum\limits_{i=1}^{\Lambda_n}\sum\limits_{j=1}^{\tilde{K}_n}\mathbbm{1}_{[G^{(1)}_{i,r_j}=x,\tilde{G}^{(2)}_{i,j}=y ]}}{\sum\limits_{i=1}^{\Lambda_n}\sum\limits_{j=1}^{\tilde{K}_n}\mathbbm{1}_{[\tilde{G}^{(2)}_{i,j}=y ]}} ,\hspace{1em}\forall (x,y)\in\mathfrak{X}^2\\
     \hat{p}_S(s) &=\frac{1}{n}\sum\limits_{j=1}^n \mathbbm{1}_{[S_j=s]},\hspace{5em}\forall s\ge0
    \end{align}
    and construct
    \begin{align}
        \hat{p}_{X,{Y}^S|S}(x,y^s|s)&=\begin{cases}
      \hat{p}_X(x) \mathbbm{1}_{[y^s = \ast]} &\text{if } s=0\\
      \hat{p}_X(x) \prod\limits_{j=1}^s \hat{p}_{Y|X}(y_j|x) &\text{if } s\ge 1
    \end{cases}
    \end{align}
with $y^s=y_1\dots y_s$.
    \item Using $\hat{S}^n$, place markers between the noisy replica runs of different columns to obtain $\tilde{\mathbf{D}}^{(2)}$. If a run has length 0, \emph{i.e.} deleted, introduce a column consisting of erasure symbol $\ast\notin\mathfrak{X}$.
    \item Fix $\epsilon>0$. Match the $l$\textsuperscript{th} row $Y^K_{l}$ of $\tilde{\mathbf{D}}^{(2)}$ with the $i$\textsuperscript{th} row $X^n_i$ of {${\mathbf{D}}^{(1)}$}, if $X_i$ is the only row of {${\mathbf{D}}^{(1)}$} jointly $\epsilon$-typical~\cite[Chapter 7.6]{cover2006elements} with $Y^K_l$ according to $\hat{p}_{X,Y^S,S}$, assigning $\hat\sigma_n(i)=l$.
Otherwise, declare an error.
\end{enumerate}
Let $\kappa_n^{(1)}$ and $\kappa_n^{(2)}$ be the error probabilities of the noisy replica detection (Algorithm~\ref{alg:noisyreplicadetection}) and the seeded deletion (Algorithm~\ref{alg:deletiondetection}) algorithms, respectively.
By the Law of Large Numbers, we have 
\begin{align}
    \hat{p}_{X,Y^S|S}\overset{p}{\to} p_{X,Y^S|S}
\end{align}
and by the Continuous Mapping Theorem~\cite[Theorem 2.3]{van2000asymptotic} we have
\begin{align}
    \hat{H}(X,Y^S|S)&\overset{p}{\to} H(X,Y^S|S)\\
I(\hat{X};\hat{Y}^{\hat{S}}|\hat{S})& \overset{p}{\to} I(X;Y^S|S)
\end{align}
where $\hat{H}(X,Y^S|S)$ and $\hat{I}(X,Y^S|S)$ denote the conditional joint entropy and conditional mutual information associated with $\hat{p}_{X,Y^S|S}$, respectively. Thus, for any $\epsilon>0$ we have
\begin{align}
\kappa_n^{(3)}&\triangleq\Pr(|\hat{H}(X,Y^S|S)-H(X,Y^S|S)|>\epsilon)\overset{n\to\infty}{\longrightarrow}0\\
\kappa_n^{(4)}&\triangleq\Pr(|\hat{I}(X,Y^S|S)-I(X,Y^S|S)|>\epsilon)\overset{n\to\infty}{\longrightarrow}0
\end{align}

Using a series of union bounds and triangle inequalities, the probability of error of the de-anonymization scheme can be bounded as 
\begin{align}
    \Pr(\text{error})&\le 2^{-n(I(X;Y^S,S)-4 \epsilon-R)}+\epsilon + \sum\limits_{i=1}^4 \kappa_n^{(i)}\\
    &\le \epsilon
\end{align}
as $n\to\infty$ as long as $R<I(X;Y^S,S)-4\epsilon$, concluding the proof of the main result.

 \section{Conclusion}
\label{sec:conclusion}
In this work, we have investigated the distribution-agnostic database de-anonymization problem under synchronization errors and noise. We have showed that the noisy replica detection algorithm of~\cite{bakirtas2022seeded} tailored for specific $p_{X,Y}$ could be adjusted to work in tandem with a moment estimator to accommodate the unknown $p_{X,Y}$. We have proposed an outlier-detection-based seeded deletion detection algorithm and showed that a seed size growing double logarithmic with the number of rows is sufficient for the correct estimation of the deletion pattern. Finally, we have used a joint-typicality-based de-anonymization scheme utilizing the estimated distributions. Overall, our results show that the resulting achievable database growth rate is equal to the matching capacity derived when full information on the underlying distributions is available.

\bibliography{references}
\bibliographystyle{IEEEtran}

\end{document}